\DeclareSIUnit{\belmilliwatt}{Bm}
\DeclareSIUnit{\dBm}{\deci\belmilliwatt}
\def\BibTeX{{\rm B\kern-.05em{\sc i\kern-.025em b}\kern-.08em
		T\kern-.1667em\lower.7ex\hbox{E}\kern-.125emX}}
\newif\iftag@here
\newcommand*{\taghere}[1][0pt]
{\ifmeasuring@\else
	\global\tag@heretrue
	\tikz[remember picture,overlay]{\coordinate (taghere) at (0pt,#1);}%
	\fi}
\def\place@tag{%
	\iftagsleft@
	\kern-\tagshift@
	\iftag@here
	\global\tag@herefalse
	\tikz[remember picture,overlay]%
	{\path (taghere) -| node[anchor=base]{\rlap{\boxz@}} (0pt,0pt);}%
	\else
	\if1\shift@tag\row@\relax
	\rlap{\vbox{%
			\normalbaselines
			\boxz@
			\vbox to\lineht@{}%
			\raise@tag
	}}%
	\else
	\rlap{\boxz@}%
	\fi
	\kern\displaywidth@
	\fi
	\else
	\kern-\tagshift@
	\iftag@here
	\global\tag@herefalse
	\tikz[remember picture,overlay]%
	{\path  (taghere) -|  node[anchor=base]{\llap{\boxz@}} (0pt,0pt);}%
	\else
	\if1\shift@tag\row@\relax
	\llap{\vtop{%
			\raise@tag
			\normalbaselines
			\setbox\@ne\null
			\dp\@ne\lineht@
			\box\@ne
			\boxz@
	}}%
	\else \llap{\boxz@}%
	\fi
	\fi
	\fi
}
\DeclareMathOperator*{\argmax}{arg\,max}
\DeclareMathOperator*{\maximize}{maximize}
\DeclareMathOperator*{\argmin}{arg\,min}
\DeclareMathOperator*{\subjectto}{subject\,to}
\newacronym{swipt}{SWIPT}{simultaneous wireless information and power transfer}
\newacronym{wpt}{WPT}{wireless power transfer}
\newacronym{wit}{WIT}{wireless information transfer}
\newacronym{iot}{IoT}{Internet-of-Things}
\newacronym{awgn}{AWGN}{additive white Gaussian noise}
\newacronym{tx}{TX}{transmitter}
\newacronym{ir}{IR}{information receiver}
\newacronym{eh}{EH}{energy harvester}
\newacronym{ap}{AP}{average power}
\newacronym{pp}{PP}{peak power}
\newacronym{siso}{SISO}{single-input single-output}
\newacronym{mimo}{MIMO}{multiple-input multiple-output}
\newacronym{miso}{MISO}{multiple-input single-output}
\newacronym{simo}{SIMO}{single-input multiple-output}
\newacronym{rf}{RF}{radio frequency}
\newacronym{dc}{DC}{direct current}
\newacronym{ac}{AC}{alternative current}
\newacronym{papr}{PAPR}{peak-to-average power ratio}
\newacronym{lp}{LPF}{low-pass filter}
\newacronym{mc}{MC}{matching circuit}
\newacronym{mrt}{MRT}{maximum ratio transmission}
\newacronym{rv}{RV}{random variable}
\newacronym{iid}{i.i.d.}{independent and identically distributed}
\newacronym{pdf}{pdf}{probability density function}
\newacronym{dnn}{DNN}{dense neural networks}
\newacronym{mdp}{MDP}{Markov decision process}
\newacronym{sca}{SCA}{successive convex approximation}
\newacronym{sdr}{SDR}{semi-definite relaxation}
\newacronym{spr}{LP}{low power}
\newacronym{mpr}{MP}{medium power}
\newacronym{lpr}{HP}{high power}
\DeclareMathOperator{\rank}{rank}
\newcommand{\norm}[1]{\left\lVert#1\right\rVert_2}
\newcommand{\Tr}[1]{\text{Tr}\{#1\} }
\begin{document}

	\newtheorem{proposition}{Proposition}	
	\newtheorem{lemma}{Lemma}	
	\newtheorem{corollary}{Corollary}
	\newtheorem{assumption}{Assumption}	
	\newtheorem{remark}{Remark}	
	
	\title{Harvested Power Region of Two-user MISO WPT Systems With Non-linear EH Nodes}
	\author{\IEEEauthorblockN{Nikita Shanin, Laura Cottatellucci, and Robert Schober}
	\IEEEauthorblockA{\textit{Friedrich-Alexander-Universit\"{a}t Erlangen-N\"{u}rnberg (FAU), Germany} }}	
	\maketitle
	
\begin{abstract}
In this paper, we determine the harvested power region of a two-user multiple-input single-output (MISO) wireless power transfer (WPT) system for a non-linear model of the rectennas at the energy harvester (EH) nodes.
To this end, we characterize the distributions of the transmit symbol vector that achieve individual points on the boundary of this region.
Each distribution is obtained as solution of an optimization problem where we maximize a weighted sum of the average harvested powers at the EH nodes under a constraint on the power budget of the transmitter.
We prove that the optimal transmit strategy employs two beamforming vectors and scalar unit norm transmit symbols with arbitrary phase.
To determine the beamforming vectors, we propose an iterative algorithm based on a two-dimensional grid search, semi-definite relaxation, and successive convex approximation.
Our numerical results reveal that the proposed design outperforms two baseline schemes based on a linear EH model and a single beamforming vector, respectively.
Finally, we observe that the harvested power region is convex and the power harvested at one EH node can be traded for a higher harvested power at the other node.
\end{abstract}
\setcounter{footnote}{0}
 
	\section{Introduction}
	\label{Section:Introduction}	
	The tremendous growth of the number of low-power \gls*{iot} devices over the past decade has exacerbated the problem of efficient charging of the batteries of these gadgets.
As a promising solution of this problem, far-field \gls*{wpt}, where power is transferred via \gls*{rf} signals, has attracted significant attention in recent years \cite{Grover2010, Zhang2013, Boshkovska2015, Kim2020, Clerckx2016a, Clerckx2018, Huang2017, Shen2020, Morsi2019, Shanin2020}.

For \gls*{siso} WPT systems with a single \gls*{eh} node, the authors of \cite{Grover2010} showed that the power delivered to the \gls*{eh} is maximized if a sinusoidal signal is broadcasted by the \gls*{tx}. 
The authors of \cite{Zhang2013} extended these results to \gls*{mimo} WPT systems and showed that the \emph{input power} at the EH is maximized if a scalar input symbol and \emph{energy beamforming} are employed at the \gls*{tx}.
However, practical EH circuits are non-linear \cite{Boshkovska2015, Kim2020} and, in this case, the optimality of the results in \cite{Grover2010} and \cite{Zhang2013} is severely degraded as far as the harvested power is concerned. 
Hence, an accurate modeling of the EH is crucial for the design of \gls*{wpt} systems \cite{Clerckx2016a, Clerckx2018, Huang2017, Shen2020, Morsi2019, Shanin2020}.

An EH node typically employs a rectenna, i.e., an antenna followed by a rectifier, which includes a non-linear diode.
In order to capture the non-linearities of the rectenna circuit, in \cite{Clerckx2016a}, the authors derived a non-linear EH model based on the Taylor series expansion of the current flow through the rectifying diode.
Based on this model, the authors in \cite{Clerckx2018} studied a WPT system with multiple antennas at the TX and a single antenna at the EH node, i.e., a single-user \gls*{miso} WPT system, and showed that the harvested power is maximized by energy beamforming \cite{Zhang2013}, which reduces to a scaled \gls*{mrt} in this case.
However, in \cite{Huang2017}, it was shown that energy beamforming is not optimal for WPT systems with multiple EH nodes employing non-linear rectenna circuits.
Furthermore, the authors in \cite{Huang2017} showed that, for multi-user \gls*{wpt} systems, there is a trade-off between the powers harvested by different EH nodes, which is characterized by a convex \emph{harvested power} region.
In \cite{Shen2020}, for a MIMO WPT system, where the EH node is equipped with multiple rectennas, the authors developed a framework for joint optimization of the transmit beamforming vector and the signal waveform for the maximization of the harvested power.

While the authors in \cite{Clerckx2016a, Clerckx2018, Huang2017, Shen2020} focused on the EH non-linearity in the low input power regime, the model developed in \cite{Morsi2019} also encompassed the saturation of the harvested power when the input power level at the EH is high.
Furthermore, the analysis in \cite{Morsi2019} and \cite{Shanin2020} showed that for \gls*{siso} WPT systems with a single EH node, it is optimal to adopt ON-OFF signaling at the TX, where the ON symbol and its probability are chosen to maximize the harvested power without saturating the EH while satisfying the average power constraint at the TX.
Although the results in \cite{Morsi2019} and \cite{Shanin2020} provide the optimal transmit strategy for single-user \gls*{siso} WPT systems, to the best of the authors' knowledge, the problem of optimizing the transmit symbol distribution for two-user \gls*{miso} WPT systems taking into account the EH circuit non-linearitites in both the low and high input power regimes has not been solved, yet. 
	
In this paper, we determine the achievable harvested power region for two-user \gls*{miso} WPT systems, where each EH node is equipped with a single rectenna.
We characterize the individual points on the boundary of the harvested power region by the corresponding probability density functions (pdfs) of the transmit symbol vector.
In order to take the non-linearity of the EH into account, we adopt the rectenna model derived in \cite{Morsi2019}.
Then, we formulate an optimization problem for the maximization of a weighted sum of the harvested powers at the EH nodes subject to a constraint on the power budget of the TX.
We show that the optimal transmit strategy employs scalar input symbols and two beamforming vectors.
To determine the two beamforming vectors and the corresponding probabilities, we develop an algorithm based on a two-dimensional grid search, \gls*{sdr}, and \gls*{sca} \cite{Coope2001, Ghanem2020, Sun2017}.
Our simulation results reveal that the proposed two-user WPT design with two beamforming vectors outperforms baseline schemes based on the linear EH model in \cite{Zhang2013} and a single beamforming vector at the TX, respectively.
Furthermore, we observe that by adjusting the user weights, the TX can control the distribution of the harvested power among the individual EH nodes.

The remainder of this paper is organized as follows. 
In Section II, we introduce the system model. 
In Section III, we formulate an optimization problem and develop an algorithm for obtaining a solution of the problem.
In Section IV, we provide numerical results to evaluate the performance of the proposed design.
Finally, in Section V, we draw some conclusions.

\emph{Notation:} Bold upper case letters $\boldsymbol{X}$ represent matrices and ${X}_{i,j}$ denotes the element of $\boldsymbol{X}$ in row $i$ and column $j$. 
Bold lower case letters $\boldsymbol{x}$ stand for vectors and ${x}_{i}$ is the $i^\text{th}$ element of $\boldsymbol{x}$.
$\boldsymbol{X}^H$, $\Tr{\boldsymbol{X}}$, and $\rank \{\boldsymbol{X}\}$ denote the Hermitian transpose, trace, and rank of matrix $\boldsymbol{X}$, respectively.
The expectation with respect to \gls*{rv} $x$ is denoted by $\mathbb{E}_x\{\cdot\}$. 
The real part of a complex number is denoted by $\Re\{ \cdot \}$.
$\norm{\boldsymbol{x}}$ represents the L2-norm of $\boldsymbol{x}$.
The imaginary unit is denoted by $j$.
$\boldsymbol{0}_{N}$ denotes the square zero matrix of size $N \times N$.
The sets of real and complex numbers are denoted by $\mathbb{R}$ and $\mathbb{C}$, respectively.
The Dirac delta function is denoted by $\delta(x)$.

	\section{System Model}
	\label{Section:SystemModelPreliminaries}
	In this section, we present the considered two-user MISO WPT system and discuss the adopted EH model.
		\subsection{MISO WPT System Model}
		\label{Section:SystemModel}
		We consider a narrow-band two-user\footnotemark  MISO \gls*{wpt} system comprising a \gls*{tx} with $N_\text{t} \geq 1$ antennas and two EH nodes, where each EH node is equipped with a single antenna.
The TX broadcasts a pulse-modulated RF signal, whose equivalent complex baseband (ECB) representation is modeled as $\boldsymbol{x}(t) = \boldsymbol{x}[n] \psi(t-nT)$, where $\boldsymbol{x}[n] \in \mathbb{C}^{N_\text{t}}$ is the transmitted vector in time slot $n$, $\psi(t)$ is a rectangular transmit pulse, and $T$ is the symbol duration.
The transmit vectors $\boldsymbol{x}[n]$ are mutually independent realizations of a random vector $\boldsymbol{x}$, whose \gls*{pdf} is denoted by $p_{\boldsymbol{x}}(\boldsymbol{x})$.
\footnotetext{Although, in this paper, for simplicity of presentation, we consider two-user MISO WPT systems, the extension of the proposed framework to the multi-user case is possible.}

The ECB channel between the TX and EH node $m, m\in\{1,2\},$ is characterized by row-vector $\boldsymbol{g}_m \in \mathbb{C}^{1 \times N_\text{t}}$.
Thus, the RF signal received in time slot $n$ at EH node $m$ is given by ${z^\text{RF}_{m}}(t) = \sqrt{2} \Re\{ \boldsymbol{g}_m \boldsymbol{x}(t) \exp(j 2 \pi f_c t) \} $, where $f_c$ denotes the carrier frequency.
The noise received at the EH nodes is ignored since its contribution to the harvested energy is negligible.

		\subsection{Energy Harvester Model}
		\label{Section:EhModel}
		
As in \cite{Morsi2019}, we assume that each EH node is equipped with a rectenna.
Each rectenna comprises an antenna, a \gls*{mc}, a non-linear rectifier with a low-pass filter, and a load resistor \cite{Morsi2019, Shanin2020}. 
In order to maximize the power delivered to the rectifier, the MC is typically well-tuned to the carrier frequency $f_c$ and is designed to match the input impedance of the non-linear rectifier circuit with the output impedance of the antenna \cite{Morsi2019}.
The rectifier is an electrical circuit that comprises a non-linear diode and a low-pass filter to convert the RF signal ${z^\text{RF}_{m}}(t)$ received in time slot $n$ by EH $m$ to a \gls*{dc} signal at the load resistor $R_\text{L}$ of the rectenna.

In this paper, we adopt the EH model derived in \cite{Morsi2019}. 
The corresponding power harvested by the rectenna as a function of the magnitude of the received ECB signal, $z_m[n] = \boldsymbol{g}_m \boldsymbol{x}[n]$, is given by\footnote{In this paper, as in \cite{Morsi2019}, we assume that both EH nodes are memoryless and have identical parameters. Hence, we drop time slot index $n$ and EH node index $m$.} 
\begin{equation}
	\phi(|{z}|^2) = \min\big\{ \varphi(|z|^2), \varphi(A_s^2) \big\},
	\label{Eqn:RaniaModel}
\end{equation}
where $\varphi(|z|^2) = \bigg[\frac{1}{a} W_0 \bigg(a\exp(a) I_0 \Big(B\sqrt{2|{z}|^2}\Big) \bigg)-1 \bigg]^2 I_s^2 R_L$, $a = \frac{I_s(R_L + R_s)}{\mu V_\text{T}}$, $B = [\mu V_\text{T} \sqrt{\Re \{1 / Z_a^* \}}]^{-1}$, and $W_0(\cdot)$ and $I_0(\cdot)$ are the principal branch of the Lambert-W function and the zeroth order modified Bessel function of the first kind, respectively.
Here, $Z_a^*$, $V_\text{T}$, $I_s$, $R_s$, and $\mu \in [1,2]$ are parameters of the rectenna circuit, namely, the complex-conjugate of the input impedance of the rectifier circuit, the thermal voltage, the reverse bias saturation current, the series resistance, and the ideality factor of the diode, respectively. 
These parameters depend on the circuit elements and are independent of the received signal.
Finally, since for large input power levels, rectenna circuits are driven into saturation \cite{Boshkovska2015, Morsi2019, Shanin2020}, the function in (\ref{Eqn:RaniaModel}) is bounded, i.e., $\phi(|z|^2) \leq \phi(A_s^2) , \; \forall z \in \mathbb{C}$, where $A_s$ is the minimum input signal magnitude, for which the output power starts to saturate.

		\section{Problem Formulation and Solution}
		\label{Section:ProblemFormulation}
		In this section, we formulate an optimization problem for the maximization of the weighted sum  of the harvested powers of the considered two-user WPT system.
		In order to solve this problem, we formulate and solve an auxiliary one-dimensional optimization problem, where we maximize the expectation of a function of a \gls*{rv} under a constraint on its mean value.
		Finally, we characterize the optimal random transmit symbol vectors as solution of the original optimization problem and develop an algorithm to obtain a suboptimal solution.
		\subsection{Problem Formulation}
		We characterize individual points in the harvested power region by the pdf $p_{\boldsymbol{x}}(\boldsymbol{x})$ of transmit symbol vector $\boldsymbol{x}$.
The boundary of this region is obtained by considering all the possible convex combinations of the weighted harvested powers at the EH nodes \cite{Huang2017}.
Hence, in order to determine the individual points on the boundary of the harvested power region, we maximize the weighted average power harvested at the EH nodes under an average power constraint at the TX.
Thus, we formulate the following optimization problem\footnotemark:
\begin{subequations}
	\begin{align}
	\maximize_{ {p}_{\boldsymbol{x}} } \quad &\overline{\Phi}(p_{\boldsymbol{x}}) 
	\label{Eqn:WPT_GeneralProblem_Obj} \\
	\subjectto \quad & \int_{\boldsymbol{x}} \norm{\boldsymbol{x}}^2 p_{\boldsymbol{x}}(\boldsymbol{x}) d \boldsymbol{x} \leq P_x, \label{Eqn:WPT_GeneralProblem_C1}\\
	&\int_{\boldsymbol{x}} p_{\boldsymbol{x}}(\boldsymbol{x}) d \boldsymbol{x} = 1. \label{Eqn:WPT_GeneralProblem_C2}
	\end{align}
	\label{Eqn:WPT_GeneralProblem}
\end{subequations}
\footnotetext{We note that the solution of (\ref{Eqn:WPT_GeneralProblem}) may not be unique. 
In particular, if affordable by the power budget $P_x$, there may be an infinite number of pdfs $p_{\boldsymbol{x}}(\boldsymbol{x})$ that drive both EH nodes into saturation, i.e., yield the maximum value of $\overline{\Phi}(p_{\boldsymbol{x}}) = \phi(A_s^2)$, while satisfying the constraints (\ref{Eqn:WPT_GeneralProblem_C1}), (\ref{Eqn:WPT_GeneralProblem_C2}).
Since, in this paper, we aim at characterizing the harvested power region, in the following, we determine one pdf $p^*_{\boldsymbol{x}}$ that solves (\ref{Eqn:WPT_GeneralProblem}).}
Here, the weighted sum of the average harvested powers at the EH nodes is defined as
\begin{equation}
\overline{\Phi}(p_{\boldsymbol{x}}) = \sum_{m = 1}^2 \xi_m \mathbb{E}_{\boldsymbol{x}} \big\{ \phi(|\boldsymbol{g}_m \boldsymbol{x}|^2) \big\},
\label{Eqn:EhUtilityFunction}
\end{equation}
where $\xi_m \geq 0, m\in\{1,2\}, \sum_m \xi_m = 1,$ is the weight for EH node $m$ \cite{Shen2020}.
We note that the weights associated with the users allow the TX to control the distribution of the harvested power among the EH nodes.
In particular, if weight $\xi_m$ is increased, the average harvested power at EH node $m$ will be increased at the expense of the average harvested power at the other EH node.
Furthermore, we impose constraints (\ref{Eqn:WPT_GeneralProblem_C1}) and (\ref{Eqn:WPT_GeneralProblem_C2}) to limit the transmit power budget at the TX and to ensure that $p_{\boldsymbol{x}}(\boldsymbol{x})$ is a valid pdf, respectively.

In order to find the optimal solution of (\ref{Eqn:WPT_GeneralProblem}), it is convenient to first solve a related auxiliary optimization problem.
To this end, in the next subsection, we consider the maximization of the expectation of a non-decreasing function $f(\nu)$ of a scalar \gls*{rv} $\nu$ under a constraint on the mean value of $\nu$.

		\subsection{Auxiliary Optimization Problem}
		Let us consider the following auxiliary optimization problem:
\begin{equation}
	\maximize_{ {p}_{\nu} } \; \mathbb{E}_\nu \{f(\nu)\} \quad \subjectto \; \mathbb{E}_\nu \{\nu\} \leq \overline{\nu},
	\label{Eqn:GeneralOptimizationProblem}
\end{equation}
\noindent where we optimize the pdf $p_\nu(\nu)$ of $\nu$ for the maximization of the expectation of $f(\nu)$ under a constraint $\overline{\nu}$ on the mean value of $\nu$.
In order to solve (\ref{Eqn:GeneralOptimizationProblem}), let us first define the slope of the straight line connecting points $(\nu_1, f(\nu_1))$ and $(\nu_2, f(\nu_2))$, where $\nu_2 > \nu_1$, as follows:
\begin{equation}
	s(\nu_1,\nu_2) = \frac{f(\nu_2) - f(\nu_1)}{\nu_2 - \nu_1}.
	\label{Eqn:SlopeFunctionF}
\end{equation}
We note that if $f(\nu)$ is convex (concave), the solution of (\ref{Eqn:GeneralOptimizationProblem}) is given by the Edmundson-Madansky (Jensen's) inequality, see, e.g., \cite{Dokov2002}. 
However, since function $\phi(\cdot)$ in (\ref{Eqn:RaniaModel}) is neither convex nor concave, in the following lemma, we extend the results in \cite{Dokov2002} to arbitrary non-decreasing functions $f(\nu)$ and determine the solution of (\ref{Eqn:GeneralOptimizationProblem}).
\begin{lemma}
	Let us consider a one-dimensional non-decreasing function $f(\nu)$ of a \gls*{rv} $\nu$.
	A solution of optimization problem (\ref{Eqn:GeneralOptimizationProblem}) is a discrete pdf given by $p_\nu^*(\nu) = \beta \delta(\nu - \nu_1^*) + (1-\beta) \delta(\nu - \nu_2^*)$, where
	$\nu_1^* = \argmin_{\nu_1 \leq \overline{\nu} } c(\nu_1)$, $c(\nu_1) = \max_{\nu_2 \geq \overline{\nu}} s(\nu_1,\nu_2)$, $\nu_2^* = \argmax_{\nu_2 \geq \overline{\nu}} s(\nu_1^*,\nu_2)$, and $\beta = \frac{\nu_2^* - \overline{\nu}}{\nu_2^* - \nu_1^*}$.
	\label{Theorem:Corollary2}	
\end{lemma}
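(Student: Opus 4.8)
The plan is to prove the lemma by a weak-duality / supporting-line argument: I would exhibit an affine function that upper-bounds $f$ everywhere, coincides with $f$ at the two support points $\nu_1^*$ and $\nu_2^*$, and has nonnegative slope, so that the mean constraint turns it into a uniform bound on the objective which the proposed two-point pdf attains with equality.

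First I would record the elementary properties of the candidate $p_\nu^*$. A direct computation shows $\beta\nu_1^* + (1-\beta)\nu_2^* = \overline{\nu}$, so $p_\nu^*$ is feasible and meets the constraint in (\ref{Eqn:GeneralOptimizationProblem}) with equality. Writing $\lambda^* = c(\nu_1^*) = s(\nu_1^*,\nu_2^*)$ for the common slope and $L(\nu) = f(\nu_1^*) + \lambda^*(\nu - \nu_1^*)$ for the chord through $(\nu_1^*, f(\nu_1^*))$ and $(\nu_2^*, f(\nu_2^*))$, affinity of $L$ gives $\mathbb{E}_{\nu^*}\{f(\nu)\} = \beta f(\nu_1^*) + (1-\beta) f(\nu_2^*) = L(\overline{\nu})$. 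Since $f$ is non-decreasing and $\nu_2^* > \nu_1^*$, the slope in (\ref{Eqn:SlopeFunctionF}) satisfies $\lambda^* \geq 0$.

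The core step is to show $L(\nu) \geq f(\nu)$ for every $\nu$, and I expect this to be the main obstacle, as it is where the nested extremal structure of $\nu_1^*$ and $\nu_2^*$ must be exploited. For $\nu \geq \overline{\nu}$ I would invoke optimality of $\nu_2^*$: since $\nu_2^* = \argmax_{\nu_2 \geq \overline{\nu}} s(\nu_1^*, \nu_2)$, we have $s(\nu_1^*, \nu) \leq \lambda^*$, which rearranges (using $\nu > \nu_1^*$) to $f(\nu) \leq L(\nu)$. For $\nu \leq \overline{\nu}$ I would use optimality of $\nu_1^*$: since $\nu_1^* = \argmin_{\nu_1 \leq \overline{\nu}} c(\nu_1)$ and such $\nu$ is feasible in that minimization, $c(\nu) \geq c(\nu_1^*) = \lambda^*$, so there is a witness $\nu' \geq \overline{\nu}$ with $s(\nu, \nu') \geq \lambda^*$, i.e. $f(\nu) \leq f(\nu') - \lambda^*(\nu' - \nu)$; substituting the already-proven bound $f(\nu') \leq L(\nu')$ and simplifying collapses the right-hand side exactly to $L(\nu)$. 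Together these two cases give the global majorization, with the boundary/degenerate configurations (such as $\nu_1^* = \overline{\nu}$, or a support point lying trivially on the chord) handled separately by continuity.

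Finally I would close with the duality bound: for any feasible $p_\nu$, the majorization together with $\lambda^* \geq 0$ yields $\mathbb{E}_\nu\{f(\nu)\} \leq \mathbb{E}_\nu\{L(\nu)\} = f(\nu_1^*) - \lambda^*\nu_1^* + \lambda^* \mathbb{E}_\nu\{\nu\} \leq L(\overline{\nu})$, where the last inequality uses $\mathbb{E}_\nu\{\nu\} \leq \overline{\nu}$. Since $p_\nu^*$ attains $L(\overline{\nu})$, it is optimal. Beyond the majorization step, the remaining care is in assuming the extrema defining $\nu_1^*$ and $\nu_2^*$ are attained — which holds here because in the application $f = \phi$ saturates and the relevant domain of $\nu = |z|^2$ is effectively compact — and in noting that the minimizer need not be unique, consistent with the footnote following (\ref{Eqn:WPT_GeneralProblem}).
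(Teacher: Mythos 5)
Your proof is correct and follows essentially the same route as the paper's: both establish that the chord through $(\nu_1^*, f(\nu_1^*))$ and $(\nu_2^*, f(\nu_2^*))$ majorizes $f$ everywhere and then bound the objective via linearity of expectation and the mean constraint. In fact, your treatment of the case $\nu \leq \overline{\nu}$ --- extracting a witness $\nu' \geq \overline{\nu}$ from $c(\nu) \geq c(\nu_1^*)$ and chaining it with the already-proven bound for $\nu \geq \overline{\nu}$ --- is more careful than the paper's, which directly asserts that $\nu_1^*$ minimizes $s(\cdot, \nu_2^*)$ (a claim that does not immediately follow from the nested min-max definition), and you also make explicit where monotonicity of $f$ enters (via $\lambda^* \geq 0$ in the final step $\mathbb{E}_\nu\{L(\nu)\} \leq L(\overline{\nu})$), which the paper leaves implicit.
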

\begin{proof}
	Please refer to Appendix~\ref{Appendix:LemmaProof}.
\end{proof}

We note that similar to (\ref{Eqn:WPT_GeneralProblem}), the solution of the optimization problem (\ref{Eqn:GeneralOptimizationProblem}) may not be unique.
Lemma~\ref{Theorem:Corollary2} reveals that for the maximization of the expectation of a non-decreasing function $f(\nu)$, the pdf $p_\nu(\nu)$ is discrete and has two mass points, $\nu_1^*$ and $\nu_2^*$, which are obtained as solutions of a min-max optimization problem. 
In the following, we exploit the result in Lemma~\ref{Theorem:Corollary2} for solving the optimization problem in (\ref{Eqn:WPT_GeneralProblem}).

		\subsection{Solution of Problem (\ref{Eqn:WPT_GeneralProblem})}
		\label{Section:MIMOSystem}
		In the following, we solve optimization problem (\ref{Eqn:WPT_GeneralProblem}). 
In Proposition~\ref{Theorem:Proposition3}, we determine the optimal transmit symbol vectors $\boldsymbol{x}$ that characterize the optimal pdf $p^*_{\boldsymbol{x}}(\boldsymbol{x})$ as solution of (\ref{Eqn:WPT_GeneralProblem}).
\begin{proposition}
	For the considered two-user MISO WPT systems, function $\overline{\Phi} (\cdot)$ is maximized for transmit symbol vectors $\boldsymbol{x} = \boldsymbol{w} s$, where $s$ are unit norm symbols  $s = \exp(j \phi_s)$ with arbitrary phases $\phi_s$.
	Here, $\boldsymbol{w}$ is a discrete random beamforming vector with $p^*_{\boldsymbol{w}}(\boldsymbol{w}) = \beta \delta(\boldsymbol{w} - \boldsymbol{w}^*_1) + (1-\beta) \delta(\boldsymbol{w} - \boldsymbol{w}^*_2)$, where $\beta = \frac{\norm{\boldsymbol{w}_2^*}^2 - P_x}{\norm{\boldsymbol{w}_1^*}^2 - \norm{\boldsymbol{w}_2^*}^2}$.
	The beamforming vectors $\boldsymbol{w}^*_n, \; n\in\{1,2\}$, are given by
	\begin{align}
	\boldsymbol{w}^*_n &= \{ \boldsymbol{w} : \Psi(\boldsymbol{w}) = \Phi(\nu_n^*) \},
	\label{Eqn:MimoPropositionBeamformerProblem} \\
	\Phi(\nu) &= \underset{ \{\boldsymbol{w} \; | \; \| \boldsymbol{w} \|_2^2  = \nu, \; \boldsymbol{w} \in \mathbb{C}^{N_t} \} }{\max}\Psi(\boldsymbol{w}),
	\label{Eqn:MimoPropositionFunction}
	\end{align}
	where $\Psi(\boldsymbol{x}) = \sum_m \xi_m \phi(|\boldsymbol{g}_m \boldsymbol{x}|^2).$ 
	Here, $\nu^*_1$ and $\nu^*_2$ are defined as in Lemma~\ref{Theorem:Corollary2} with $\overline{\nu} = P_x$ and $s(\nu_1, \nu_2) = \big(\Phi(\nu_2) - \Phi(\nu_1)\big) / \big(\nu_2 - \nu_1\big)$.
	\label{Theorem:Proposition3}
\end{proposition}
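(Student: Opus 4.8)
The plan is to exploit two structural properties of the objective $\overline{\Phi}$ and the power constraint \eqref{Eqn:WPT_GeneralProblem_C1}: invariance under a global phase rotation and reducibility to the scalar auxiliary problem \eqref{Eqn:GeneralOptimizationProblem}. First I would establish the phase invariance. Since $\phi(\cdot)$ in \eqref{Eqn:RaniaModel} depends on its argument only through $|z|^2$, for any unit-norm scalar $s = \exp(j\phi_s)$ we have $|\boldsymbol{g}_m(\boldsymbol{x}s)|^2 = |\boldsymbol{g}_m\boldsymbol{x}|^2$ and $\norm{\boldsymbol{x}s}^2 = \norm{\boldsymbol{x}}^2$, so both $\Psi(\boldsymbol{x}) = \sum_m \xi_m \phi(|\boldsymbol{g}_m\boldsymbol{x}|^2)$ and the transmit power $\norm{\boldsymbol{x}}^2$ are unaffected by $\phi_s$. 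Hence the phase is immaterial, and we may write $\boldsymbol{x} = \boldsymbol{w}s$ with arbitrary $\phi_s$ and optimize only over the distribution of $\boldsymbol{w}$, which already yields the claimed symbol structure.

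Next I would reduce the vector problem to \eqref{Eqn:GeneralOptimizationProblem}. For a fixed squared norm $\nu = \norm{\boldsymbol{w}}^2$, the largest achievable value of $\Psi$ is $\Phi(\nu)$ defined in \eqref{Eqn:MimoPropositionFunction}; as $\Psi$ is continuous and the sphere $\{\boldsymbol{w} : \norm{\boldsymbol{w}}^2 = \nu\}$ is compact, this maximum is attained, which defines the maximizers $\boldsymbol{w}_n^*$ in \eqref{Eqn:MimoPropositionBeamformerProblem}. By construction $\Psi(\boldsymbol{w}) \leq \Phi(\norm{\boldsymbol{w}}^2)$ for every $\boldsymbol{w}$, so treating $\nu = \norm{\boldsymbol{w}}^2$ as a scalar \gls*{rv} gives the upper bound $\mathbb{E}\{\Psi(\boldsymbol{w})\} \leq \mathbb{E}_\nu\{\Phi(\nu)\}$, while \eqref{Eqn:WPT_GeneralProblem_C1} becomes $\mathbb{E}_\nu\{\nu\} \leq P_x$. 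A key intermediate step is to verify that $\Phi$ is non-decreasing so that Lemma~\ref{Theorem:Corollary2} applies: for $\nu_2 > \nu_1$, scaling a maximizer of $\Phi(\nu_1)$ to $\sqrt{\nu_2/\nu_1}\,\boldsymbol{w}_1^*$ increases each $|\boldsymbol{g}_m\boldsymbol{w}|^2$, and since $\phi$ is non-decreasing this yields $\Phi(\nu_2) \geq \Psi(\sqrt{\nu_2/\nu_1}\,\boldsymbol{w}_1^*) \geq \Phi(\nu_1)$.

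With $f = \Phi$ and $\overline{\nu} = P_x$, Lemma~\ref{Theorem:Corollary2} then shows that the upper bound $\mathbb{E}_\nu\{\Phi(\nu)\}$ is maximized by the two-point distribution $p_\nu^* = \beta\delta(\nu - \nu_1^*) + (1-\beta)\delta(\nu - \nu_2^*)$, with $\nu_1^*,\nu_2^*$ as stated and $\beta = (\nu_2^* - P_x)/(\nu_2^* - \nu_1^*)$. To close the argument I would exhibit a feasible $p_{\boldsymbol{x}}$ attaining this bound: place mass $\beta$ on $\boldsymbol{w} = \boldsymbol{w}_1^*$ and mass $1-\beta$ on $\boldsymbol{w} = \boldsymbol{w}_2^*$, where $\norm{\boldsymbol{w}_n^*}^2 = \nu_n^*$. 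Then $\mathbb{E}\{\norm{\boldsymbol{x}}^2\} = \beta\nu_1^* + (1-\beta)\nu_2^* = P_x$ meets \eqref{Eqn:WPT_GeneralProblem_C1} with equality, and $\mathbb{E}\{\Psi(\boldsymbol{x})\} = \beta\Phi(\nu_1^*) + (1-\beta)\Phi(\nu_2^*)$ equals the upper bound, so this distribution solves \eqref{Eqn:WPT_GeneralProblem}.

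I expect the reduction step, rather than the final substitution, to be the main obstacle: one must argue that replacing $\Psi$ by its spherical maximum $\Phi$ both upper-bounds the objective \emph{and} remains achievable, and—crucially for invoking Lemma~\ref{Theorem:Corollary2}—that the induced scalar function $\Phi$ inherits monotonicity from $\phi$ even though it is defined through a non-convex maximization over the sphere. The attainment of the maximum in \eqref{Eqn:MimoPropositionFunction}, which guarantees that the $\boldsymbol{w}_n^*$ exist and that the bound is tight, should also be stated explicitly. Everything else follows by directly transcribing Lemma~\ref{Theorem:Corollary2} under the identifications $f = \Phi$, $\nu = \norm{\boldsymbol{w}}^2$, and $\overline{\nu} = P_x$.
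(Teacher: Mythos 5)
Your proof follows essentially the same route as the paper's: replace $\Psi$ by its spherical maximum $\Phi(\nu)$ to reduce \eqref{Eqn:WPT_GeneralProblem} to the scalar auxiliary problem \eqref{Eqn:GeneralOptimizationProblem} with $\overline{\nu} = P_x$, then invoke Lemma~\ref{Theorem:Corollary2}. You are in fact more explicit than the paper on several points it merely asserts or leaves implicit (the phase invariance, the attainment of the maximum on the sphere, the monotonicity of $\Phi$, and the achievability of the upper bound), so the argument is correct and complete.
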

\begin{proof}
	Please refer to Appendix~\ref{Appendix:Prop3Proof}.
\end{proof}

Proposition~\ref{Theorem:Proposition3} reveals that the optimal transmit vectors $\boldsymbol{x}$ are characterized by scalar unit norm symbols $s$ with arbitrary phases\footnotemark and two beamforming vectors, $\boldsymbol{w}^*_1$ and $\boldsymbol{w}^*_2$, respectively.
\footnotetext{We note that the phase $\phi_s$ of scalar symbol $s$ can be chosen arbitrarily in each time slot $n$. This degree of freedom can be further exploited, for example, to transmit information \cite{Shanin2020}.}
We note that in order to determine the optimal vectors $\boldsymbol{w}^*_n, n\in\{1,2\},$ the function $\Phi(\cdot)$ and the values $\nu_1^*$, $\nu_2^*$ as solutions of (\ref{Eqn:MimoPropositionFunction}) and the non-convex min-max optimization problems in Lemma~\ref{Theorem:Corollary2}, respectively, are required.
In the following, in order to determine the values of function $\Phi(\cdot)$, we first exploit \gls*{sdr} and \gls*{sca} and develop a low-complexity algorithm to determine a suboptimal solution of (\ref{Eqn:MimoPropositionFunction}).
Then, due to the low dimensionality of the min-max problem, we propose to obtain the optimal values $\nu_1^*$, $\nu_2^*$ and the corresponding vectors $\boldsymbol{w}^*_n, n\in\{1,2\},$ via a two-dimensional grid search \cite{Coope2001}.

			\subsubsection{Solution of (\ref{Eqn:MimoPropositionFunction})}
			\label{Section:SuboptimalSolution}
			Since (\ref{Eqn:MimoPropositionFunction}) is a non-convex optimization problem, determining its optimal solution is, in general, NP-hard.
Therefore, in the following, we develop an iterative algorithm to obtain a suboptimal solution of (\ref{Eqn:MimoPropositionFunction}).
To this end, we first define matrix $\boldsymbol{W} = \boldsymbol{w} \boldsymbol{w}^H$ and reformulate problem (\ref{Eqn:MimoPropositionFunction}) equivalently as follows:
\begin{subequations}
	\begin{align}
	\maximize_{\boldsymbol{W} \in \mathcal{S}_{+}} \quad & \psi(\boldsymbol{W}) 
	\label{Eqn:MimoSuboptimalFunctionRef_Obj} \\
	\subjectto \quad &\Tr{\boldsymbol{W}} \leq \nu, \label{Eqn:MimoSuboptimalFunctionRef_C1} \\
	&\rank \{ \boldsymbol{W} \} = 1, \label{Eqn:MimoSuboptimalFunctionRef_C2}
	\end{align}	
	\label{Eqn:MimoSuboptimalFunctionRef}
\end{subequations}
\noindent\hspace*{-3.3pt}where $\psi(\boldsymbol{W}) = \sum_{m = 1}^{2} \xi_m \phi(\boldsymbol{g}_m \boldsymbol{W} \boldsymbol{g}_m^H)$ and $\mathcal{S}_{+}$ denotes the set of positive semidefinite matrices.
Since the objective function in (\ref{Eqn:MimoPropositionFunction}) is monotonic non-decreasing in $|\boldsymbol{w}|$, we equivalently replace the equality constraint in (\ref{Eqn:MimoPropositionFunction}) by inequality constraint (\ref{Eqn:MimoSuboptimalFunctionRef_C1}).

Optimization problem (\ref{Eqn:MimoSuboptimalFunctionRef}) is non-convex due to the non-convexity of the objective function (\ref{Eqn:MimoSuboptimalFunctionRef_Obj}) and constraint (\ref{Eqn:MimoSuboptimalFunctionRef_C2}). 
Therefore, in order to obtain a suboptimal solution of (\ref{Eqn:MimoSuboptimalFunctionRef}), we employ \gls*{sdr} and drop constraint (\ref{Eqn:MimoSuboptimalFunctionRef_C2}).
Next, we define sets $\mathcal{W}_{i,j}, i,j \in \{0,1\},$ where $\forall \, \boldsymbol{W} \in \mathcal{W}_{i,j}$, EH node $m=1$ and $m=2$ is driven into saturation if and only if $i=1$ and $j=1$, respectively.
We note that $\mathcal{W}_{0,0} \cup \mathcal{W}_{0,1} \cup \mathcal{W}_{1,0} \cup \mathcal{W}_{1,1} = \mathcal{S}_{+}$.
Since EH node $m$ is driven into saturation if and only if $\boldsymbol{g}_{m} \boldsymbol{W} \boldsymbol{g}_{m}^H \geq A_s^2$, cf. (\ref{Eqn:RaniaModel}), subset $\mathcal{W}_{i,j},  i,j\in\{0,1\},$ is convex and given by $\mathcal{W}_{i,j} = \big\{ \boldsymbol{W}: (-1)^i (\boldsymbol{g}_{1} \boldsymbol{W} \boldsymbol{g}_{1}^H - A_s^2) \leq 0, \; (-1)^j (\boldsymbol{g}_{2} \boldsymbol{W} \boldsymbol{g}_{2}^H - A_s^2) \leq 0, \boldsymbol{W} \in \mathcal{S}_{+}  \big\}$.
Furthermore, since function $\phi(|z|^2)$ is convex in the intervals $|z|^2 \in [0, A_s^2)$ and $|z|^2 \in [A_s^2, +\infty)$, as a weighted sum of convex functions, objective function $\psi(\boldsymbol{W})$ is convex in each subset $\mathcal{W}_{i,j}, i,j \in \{0,1\}$.
Therefore, we split the relaxed non-convex optimization problem (\ref{Eqn:MimoSuboptimalFunctionRef_Obj}), (\ref{Eqn:MimoSuboptimalFunctionRef_C1}) into four problems as follows:
\begin{equation}
	\maximize_{\boldsymbol{W} \in \mathcal{W}_{i,j} } \quad \psi(\boldsymbol{W}) \;
	\subjectto  \, (\text{\upshape\ref{Eqn:MimoSuboptimalFunctionRef_C1}}),
	\label{Eqn:MimoSuboptimalFunctionRefConv}
\end{equation}	
\noindent for $i,j \in \{0,1\}$.
Then, as a suboptimal solution of (\ref{Eqn:MimoSuboptimalFunctionRef}), we adopt 
\begin{equation}
	\boldsymbol{W}^* = \argmax \big\{ \psi(\boldsymbol{W}^*_{i,j}), \; i,j\in\{0,1\} \big\},
	\label{Eqn:MimoSuboptimalSolution}
\end{equation} 
\noindent where $\boldsymbol{W}^*_{i,j}$ is the solution of (\ref{Eqn:MimoSuboptimalFunctionRefConv}) for $i,j \in \{0,1\}$.

Next, in order to solve (\ref{Eqn:MimoSuboptimalFunctionRefConv}), we employ \gls*{sca} \cite{Sun2017}.
To this end, we construct an underestimate of the convex objective function $\psi(\boldsymbol{W})$ as follows:
\begin{equation}
{\psi}(\boldsymbol{W}) \geq \hat{\psi}(\boldsymbol{W}, \boldsymbol{W}^{(k)}),
\end{equation}
\noindent where $\boldsymbol{W}^{(k)}$ is the solution obtained in iteration $k$ of the \gls*{sca} algorithm, $\hat{\psi}(\boldsymbol{W}, \boldsymbol{W}^{(k)}) = \psi(\boldsymbol{W}^{(k)}) + \Tr{ \triangledown{\psi}(\boldsymbol{W}^{(k)})^H (\boldsymbol{W} - \boldsymbol{W}^{(k)} ) }$, and $\triangledown{\psi}(\boldsymbol{W}^{(k)})$ is the gradient of $\psi(\cdot)$ evaluated at $\boldsymbol{W}^{(k)}$ \cite{Ghanem2020}.
Thus, in each iteration $k$ of the proposed algorithm, we solve the following optimization problem:
\begin{equation}
\begin{aligned}
\boldsymbol{W}^{(k+1)} = \argmax_{\boldsymbol{W} \in  \mathcal{W}_{i,j}  } \, \hat{\psi}(\boldsymbol{W}, \boldsymbol{W}^{(k)})  \;
\subjectto \,  (\text{\upshape\ref{Eqn:MimoSuboptimalFunctionRef_C1}}).	
\end{aligned}	
\label{Eqn:MimoSuboptimalFunctionAlg}
\end{equation}
\noindent We note that if (\ref{Eqn:MimoSuboptimalFunctionAlg}) is feasible, the optimization problem is convex and can be solved with standard numerical optimization tools, such as CVX \cite{Grant2015}.
Since $\psi(\boldsymbol{W}^{(k+1)}) \geq \hat{\psi}(\boldsymbol{W}^{(k+1)}, \boldsymbol{W}^{(k)}) \geq \psi(\boldsymbol{W}^{(k)})$, the proposed algorithm converges to a limit point of (\ref{Eqn:MimoSuboptimalFunctionRefConv}).
Finally, we obtain the beamforming vector $\boldsymbol{w}^*$ as the dominant eigenvector\footnotemark of the solution $\boldsymbol{W}^*$ of (\ref{Eqn:MimoSuboptimalSolution}) and compute the corresponding value of function $\Phi(\nu)~=~\Psi(\boldsymbol{w}^*)$.
\footnotetext{It can be shown that for $\nu > 0$, the optimal solution $\boldsymbol{W}^*$ meets $\rank \{\boldsymbol{W^*}\} = 1$. The corresponding proof is similar to the one in \cite[Appendix]{Xu2019} and is omitted here due to the space constraints. }
The proposed algorithm is summarized in Algorithm~\ref{OptimizationAlgorithmFunction}.
The computational complexity of the proposed algorithm per inner loop iteration scales with $N_\text{t}$ and is given by\footnotemark $\mathcal{O}\big(N_\text{t}^{\frac{7}{2}}\big)$, where $\mathcal{O}(\cdot)$ is the big-O notation.
\footnotetext{The computational complexity of a semidefinite optimization problem involving $m$ constraints and an $n \times n$ positive semidefinite matrix is given by $\mathcal{O}\big(\sqrt{n}(mn^3 + m^2n^2 + m^3) \big)$ \cite{Polik2010}. For (\ref{Eqn:MimoSuboptimalFunctionAlg}), we have $m = 3$ and $n = N_\text{t}$.}
\begin{algorithm}[tb]				
	\SetAlgoNoLine%
	\SetKwFor{Foreach}{for each}{do}{end}		
	Initialize: Transmit power $\nu$, tolerance error $\epsilon_\text{SCA}$.	\\	
	1. Set iteration indices $i = 0, j = 0$, initialize matrices $\boldsymbol{W}^*_{i,j} = \boldsymbol{0}_{N_\text{t}}, i,j\in\{0,1\}$.\\
	\For{$i = 0$ {\upshape to} $1$}{	
		\For{$j = 0$ {\upshape to} $1$}{		
			2.1 Set initial values $h^{(0)} = 0$, $k = 0$, and randomly initialize $\boldsymbol{W}^{(0)} \in \mathcal{W}_{i,j}$ \\
			\If{\text{\upshape (\ref{Eqn:MimoSuboptimalFunctionAlg}) is feasible} }{
				\Repeat{$|h^{(k)}-h^{(k-1)}|\leq \epsilon_\text{\upshape{SCA}}$ }{		
					a. For given $\boldsymbol{W}^{(k)}$, obtain $\boldsymbol{W}^{(k+1)}$ as solution of (\ref{Eqn:MimoSuboptimalFunctionAlg}) \\								
					b. Evaluate $h^{(k+1)} = {\psi}(\boldsymbol{W}^{(k+1)})$\\
					c. Set $k = k+1$\\ 
				}				
			2.2. Set $\boldsymbol{W}^*_{i,j} = \boldsymbol{W}^{(k)}$
			}					
		}	
	}
	3. Determine $\boldsymbol{W}^*$ as solution of (\ref{Eqn:MimoSuboptimalSolution})\\
	4. Obtain $\boldsymbol{w}^*$ as the dominant eigenvector of $\boldsymbol{W}^{*}$ and evaluate $\Phi(\nu) = \Psi(\boldsymbol{w}^*)$\\
	\textbf{Output:} 
	$ \Phi(\nu)$, $\boldsymbol{w}^*$
	\caption{\strut Suboptimal algorithm for solving optimization problem (\ref{Eqn:MimoPropositionFunction}) }
	\label{OptimizationAlgorithmFunction}
\end{algorithm}

			\subsubsection{Grid Search Method}
			\label{Section:Gridearch}
			In the following, we propose a grid-search based method for solving the min-max optimization problem in Proposition~\ref{Theorem:Proposition3}.
We note that this problem is not convex since function $s(\nu_1, \nu_2)$ is not convex and not concave in $\nu_1$ and $\nu_2$, respectively. 
However, since the dimensionality of the problem is low, performing a grid search to determine $\nu_1^*$ and $\nu_2^*$ is feasible \cite{Coope2001}.
To this end, we define a uniform grid $\mathcal{P} = \{\rho_0, \rho_1, \rho_2, ..., \rho_{N_\rho}\}$, where $\rho_0 = 0$, $\rho_j = \Delta_\rho + \rho_{j-1}$, $j = 1,2,...,{N_\rho}$, and $\Delta_\rho$ is a predefined step size.
$N_\rho$ is the grid size that is chosen sufficiently large in our simulations to ensure that the function $\Phi(\cdot)$ saturates, i.e., both EH nodes are driven into saturation for $\nu = \rho_{N_\rho}$
Then, we define the smallest element of $\mathcal{P}$, which is larger than $P_x$ as $\rho_n$, i.e., $\rho_n = \min\{\rho_j | \rho_j \geq P_x, j = 0,1,...,N_\rho\}$.
Next, we define a matrix $\boldsymbol{S} \in \mathbb{R}^{n\times (N_\rho-n+1)}$, whose elements are the values of function $s(\cdot, \cdot)$ evaluated at the points of $\mathcal{P}$, i.e., ${S}_{i,j} = s(\rho_i, \rho_{j'})$, $i = 0,1,...,{n-1}$, $j = {j'}-n$, and ${j'} = n,n+1,...,N_\rho$.
Finally, we obtain the power values $\nu_1^* = \rho_{i^*}$ and $\nu_2^*=\rho_{n+j^*}$, where $i^* = \argmin_{i} \max_{j} {S}_{i,j}$ and $j^* = \argmax_{j} {S}_{i^*,j}$, respectively, and the corresponding beamforming vectors $\boldsymbol{w}_n^*, n \in \{1,2\},$ in (\ref{Eqn:MimoPropositionBeamformerProblem}).
The proposed scheme is summarized in Algorithm~\ref{AlgorithmGridSearch}.
The computational complexity of the proposed scheme is quadratic with respect to the grid size $N_\rho$ and does not depend on the number of transmit antennas.
\begin{algorithm}[!t]				
	\SetAlgoNoLine%
	\SetKwFor{Foreach}{for each}{do}{end}		
	Initialize: Grid size $N_{\rho}$, step size $\Delta_{\rho}$, maximum TX power $P_x$, initial values $\rho_0=0$, $m=0$.	\\	
	1. Compute the grid $\mathcal{P}$ and the values of $\Phi(\cdot)$ for the grid elements:\\
	\Repeat{$m > N_\rho$ }{
		1.1. Compute $\Phi_{m}$ = $\Phi(\rho_{m})$ and $\boldsymbol{v}_m = \boldsymbol{w}^*$ with Algorithm~\ref{OptimizationAlgorithmFunction}\\
		1.2. Set $\rho_{m+1} = \rho_{m} + \Delta_{\rho}$ \\
		1.3. $m = m+1$\\
	}
	2. Determine the grid element $\rho_n = \min\{\rho_j | \rho_j \geq P_x, j = 0,1,...,N_\rho \}$ \\
	3. Calculate the elements of matrix $\boldsymbol{S}$ as ${S}_{i,j} = s(\rho_i, \rho_{j'}) = \frac{\Phi_{j'} - \Phi_i}{\rho_{j'} - \rho_i}$, $i = 0,1,...,{n-1}$, $j = {j'}-n$, and ${j'} = n,n+1,...,N_\rho$ \\
	4. Determine power values and the corresponding vectors $\nu_1^* = \rho_{i^*}, \boldsymbol{w}^*_1 = \boldsymbol{v}_{i^*}$ and $\nu_2^*=\rho_{n+j^*}, \boldsymbol{w}^*_2 = \boldsymbol{v}_{n+j^*}$, where $i^* = \argmin_{i} \max_{j} {S}_{i,j}$ and $j^* = \argmax_{j} {S}_{i^*,j}$ \\
	\textbf{Output:} Optimal vectors $\boldsymbol{w}_1^*$, $\boldsymbol{w}_2^*$, and $\beta = \frac{\nu_2^* - P_x}{\nu_2^*-\nu_1^*}$
	\caption{\strut Grid search for determining the optimal vectors $\boldsymbol{w}_1^*$ and $\boldsymbol{w}_2^*$ }
	\label{AlgorithmGridSearch}
\end{algorithm}

	\section{Numerical Results}
	In this section, we study the harvested power regions via simulations.
		In our simulations, the path losses are calculated as $35.3+37.6\log_{10}(d_m)$, where $d_m$ is the distance between the TX and EH node $m$ \cite{Ghanem2020}.
Furthermore, in order to harvest a meaningful amount of power, we assume that the TX and each EH node have a line-of-sight link.
Thus, the channel gains $\boldsymbol{g}_m$ follow Rician distributions with Rician factor $1$.
For the EH model in (\ref{Eqn:RaniaModel}), we adopt the parameter values {$a=1.29$, $B = 1.55\cdot10^3$, $I_s = \SI{5}{\micro\ampere}$, $R_L = \SI{10}{\kilo\ohm}$, and $A_s^2 = \SI{25}{\micro\watt}$}, respectively \cite{Morsi2019}.
For Algorithms~\ref{OptimizationAlgorithmFunction} and \ref{AlgorithmGridSearch}, we adopt a step size, grid size, and error tolerance value of $\Delta_{\rho} = 0.1$, $N_{\rho} = 10^3$, and $\epsilon_{\text{SCA}} = 10^{-3}$.
All simulation results were averaged over $100$ channel realizations.

		\begin{figure*}[!t]
	\centering
	\subfigure[Low transmit power regime, $P_x = \SI{5}{\watt}$]{
		\includegraphics[width=0.35\textwidth, draft=false]{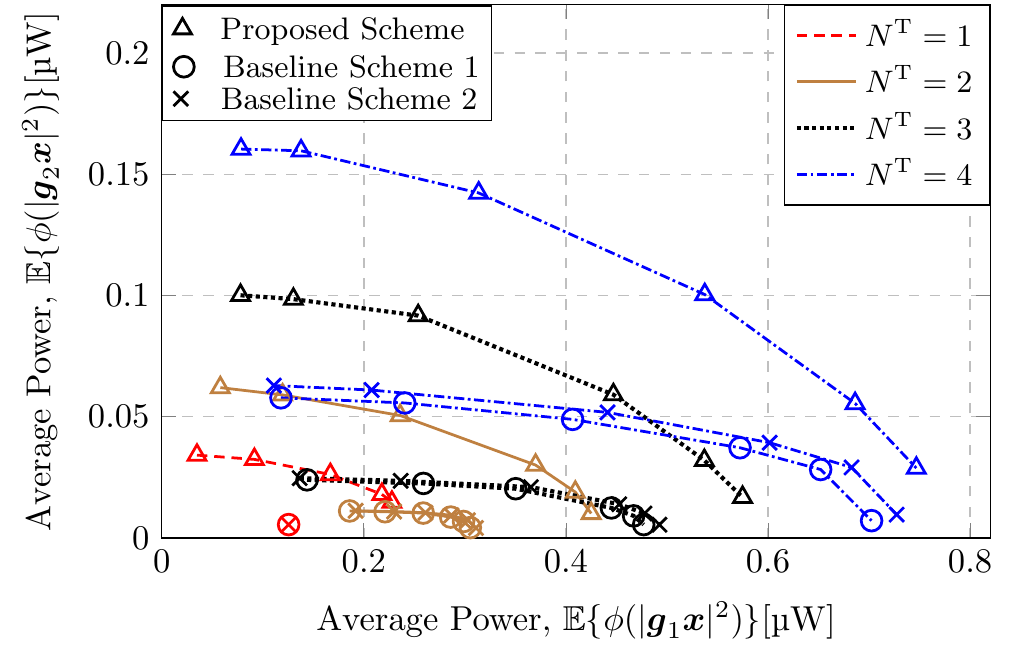} \label{Fig:Results_MU_Region_LP}}
	\quad
	\subfigure[High transmit power regime, $P_x = \SI{30}{\watt}$]{
		\includegraphics[width=0.35\textwidth, draft=false]{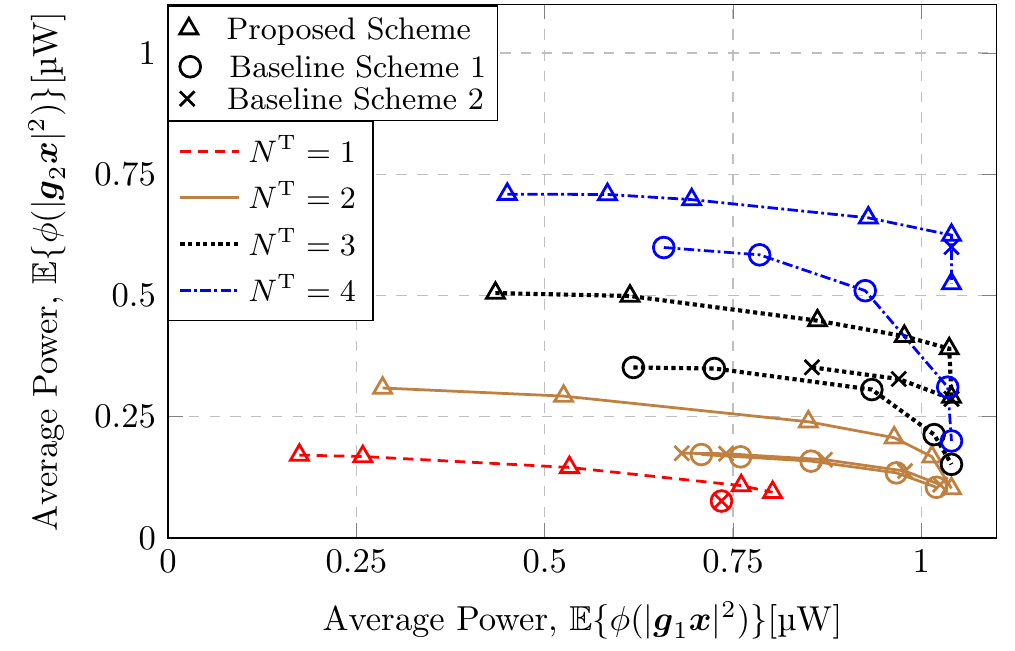} \label{Fig:Results_MU_Region_HP}}
\caption{Harvested power regions of the two-user MISO WPT system for different numbers of transmit antennas $N_\text{t}$.}
	\label{Fig:Results_MU_Region}
	\vspace*{-10pt}
\end{figure*}

In Fig.~\ref{Fig:Results_MU_Region}, we show the boundaries of the harvested power regions, whose individual points are obtained based on Proposition~\ref{Theorem:Proposition3} and Algorithms~\ref{OptimizationAlgorithmFunction} and \ref{AlgorithmGridSearch} by varying the user weights $\xi_m \in [0,1], m\in\{1,2\}$.
For the optimal pdfs $p_{\boldsymbol{x}}^*(\boldsymbol{x})$, we determine the average harvested powers at the EH nodes as $\mathbb{E}_{\boldsymbol{x}} \{\phi(| \boldsymbol{g}_m \boldsymbol{x}|^2)\}$, respectively.
As Baseline Schemes 1 and 2, we adopt the solution obtained in \cite{Zhang2013} for a linear EH model and, similar to \cite{Shen2020}, the TX design employing a single beamforming vector obtained with Algorithm~\ref{OptimizationAlgorithmFunction}, respectively.
In Fig.~\ref{Fig:Results_MU_Region_LP}, we consider a low transmit power regime characterized by a power budget of $P_x = \SI{5}{\watt}$, whereas for the results in Fig.~\ref{Fig:Results_MU_Region_HP}, we assume a high transmit power regime with $P_x = \SI{30}{\watt}$.
The distances between the EH nodes and the TX are equal to $d_1 = \SI{10}{\meter}$ and $d_2 = \SI{25}{\meter}$, respectively.

In Fig.~\ref{Fig:Results_MU_Region}, we observe that, for all considered schemes, higher values of $N_\text{t}$ and $P_x$ yield larger average harvested powers at both EH nodes.
Furthermore, the proposed scheme yields a superior performance compared to the baseline schemes.
We note that Baseline Scheme 2, where the TX is equipped with a single beamforming vector obtained with Algorithm~\ref{OptimizationAlgorithmFunction}, outperforms Baseline Scheme 1, where energy beamforming is adopted \cite{Zhang2013}.
For both transmit power regimes, we observe that for two-user SISO WPT systems, i.e., $N_\text{t} = 1$, the performances obtained with both baseline schemes are identical and do not depend on the adopted weights $\xi_1$ and $\xi_2$ since, in this case, the transmit strategy is determined by the transmit power budget $P_x$.
Moreover, in the high transmit power regime, we observe that the performance of Baseline Scheme 2 also does not depend on $\xi_m, m = \{1,2\},$ for large numbers of transmit antennas, i.e., $N_\text{t} = 4$, since, for large $N_\text{t}$, EH node $1$ is driven into saturation anyways.
However, for the other system setups, the choice of the weights $\xi_m, m\in\{1,2\},$ enables a trade-off between the powers harvested at the EH nodes, which is characterized by a convex \emph{harvested power region}. 
Furthermore, by increasing weight $\xi_m$, more power is harvested at the EH node $m$ at the expense of a reduction of the powers harvested by the other node.
Thus, by choosing the user weights, the TX can control the distribution of the harvested power among the users.
In particular, for $\xi_1=1$ (and $\xi_2 = 0$), the TX maximizes the average harvested power at the EH node $1$ and neglects EH node $2$, which may yield a substantial decrease of the power at EH node $2$, as can be observed in Fig.~\ref{Fig:Results_MU_Region_HP}.
However, since, in the high transmit power regime, EH node $1$ is almost driven into saturation for  $N_\text{t} = 4$, by decreasing $\xi_1$ (and increasing $\xi_2$), it is possible to significantly increase the harvested power at EH node $2$ without a substantial reduction of the power at user $1$.

	\section{Conclusion}
	In this paper, we studied the achievable harvested power region of two-user MISO WPT systems with non-linear EH nodes.
We characterized the points of the harvested power region by the pdfs of the transmit symbol vectors. 
The boundary of the harvested power region can be obtained by considering all the possible convex combinations of the harvested powers at the EH nodes.
In order to obtain the points on the boundary of the harvested power region, we formulated an optimization problem for the maximization of the weighted sum of the average harvested powers at the EH nodes under a constraint on the power budget of the TX.
We showed that it is optimal to employ scalar unit norm transmit symbols with arbitrary phase and two beamforming vectors.
In order to determine these beamforming vectors, we formulated a non-convex optimization problem and proposed an efficient iterative algorithm to obtain a suboptimal solution.
Our simulation results reveal that the proposed scheme significantly outperforms baseline schemes based on a linear EH model and a single beamforming vector at the TX, respectively.
Furthermore, we observed a trade-off between the individual powers harvested at the EH nodes, which is characterized by a convex harvested power region.

	\appendices
	\begin{appendices}
		\renewcommand{\thesection}{\Alph{section}}
		\renewcommand{\thesubsection}{\thesection.\arabic{subsection}}
		\renewcommand{\thesectiondis}[2]{\Alph{section}:}
		\renewcommand{\thesubsectiondis}{\thesection.\arabic{subsection}:}	
		\section{Proof of Lemma 1}
		\label{Appendix:LemmaProof}
		In the following, we prove Lemma \ref{Theorem:Corollary2}.
First, we note that since $\nu_2^*$ is the maximizer of the slope function $s(\cdot, \cdot)$ for $\nu_1 = {\nu}^*_1$, then, $\forall \nu \geq \overline{\nu}$, we have
\begin{equation}
\frac{f(\nu_2^*) - f({\nu}^*_1)}{\nu_2^* - {\nu}^*_1} \geq \frac{f(\nu) - f({\nu}^*_1)}{\nu - {\nu}^*_1}.
\label{Eqn:LemmaProof1}
\end{equation}
Then, since $\nu_1^*$ is the minimizer of $s(\cdot, \cdot)$ for $\nu_2 = \nu_2^*$, $\forall \nu \leq \overline{\nu}$, we have
\begin{align}
&\frac{f(\nu_2^*) - f(\nu_1^*)}{\nu_2^* - \nu_1^*} \leq \frac{f(\nu_2^*) - f(\nu)}{\nu_2^* - \nu}   \Longleftrightarrow \\
&f(\nu_1^*)\nu_2^* - f(\nu)(\nu_2^*-\nu_1^*) \geq f(\nu_2^*)(\nu_1^* - \nu) + f(\nu_1^*)\nu.
\label{Eqn:LemmaProof2}
\end{align} 
Next, we subtract $f(\nu_1^*)\nu_1^*$ from both sides of (\ref{Eqn:LemmaProof2}). 
This allows us to rewrite both (\ref{Eqn:LemmaProof1}) and (\ref{Eqn:LemmaProof2}) as follows:
\begin{equation}
f(\nu_1^*) - f(\nu) \geq \frac{f(\nu_2^*) - f(\nu_1^*)}{\nu_2^* - \nu_1^*} (\nu_1^* - \nu),
\label{Eqn:LemmaProof3}
\end{equation}
\noindent respectively, which, thus, holds $\forall \nu \in \mathbb{R}$.
Let us define linear function $g(\nu) = f(\nu_1^*) + \frac{f(\nu_2^*) - f(\nu_1^*)}{\nu_2^* - \nu_1^*} (\nu - \nu_1^*)$.
From (\ref{Eqn:LemmaProof3}), we have $f(\nu) \leq g(\nu)$. Thus, the objective function in problem (\ref{Eqn:GeneralOptimizationProblem}) is upper-bounded by $\mathbb{E}_\nu \{f(\nu)\} \leq \mathbb{E}_\nu \{g(\nu)\}$, which, hence, yields Lemma~\ref{Theorem:Corollary2}.
This concludes the proof.

		\section{Proof of Proposition 1}
		\label{Appendix:Prop3Proof}
			First, we note that for any arbitrary transmit symbol $\tilde{\boldsymbol{x}}$, there is a symbol $\hat{\boldsymbol{x}}$ given by
	\begin{equation}
		\hat{\boldsymbol{x}} = \argmax_{\boldsymbol{x}} \Psi(\boldsymbol{x}) \; \subjectto \norm{\boldsymbol{x}}^2 = \norm{\tilde{\boldsymbol{x}}}^2,
	\end{equation}  
	which has the same transmit power and yields a higher or equal value of $\Psi(\boldsymbol{x})$.
	Hence, for any arbitrary distribution of transmit symbols with a point of increase $\tilde{\boldsymbol{x}}$, a larger value of $\Psi(\boldsymbol{x})$ can be obtained by removing this point and increasing the probability of $\hat{\boldsymbol{x}}$ by the corresponding value.
	
	Let us introduce now the monotonically non-decreasing function $\Phi(\nu)$ in (\ref{Eqn:MimoPropositionFunction}) that returns the largest possible value of $\Psi(\boldsymbol{x})$ if a symbol with power $\nu$ was transmitted. 
	Since (\ref{Eqn:EhUtilityFunction}) can be rewritten as $\overline{\Phi}(p_{\boldsymbol{x}}) = \mathbb{E}\{\Psi(\boldsymbol{x})\}$, the solution of (\ref{Eqn:WPT_GeneralProblem}) can be obtained by determining first the solution $p^*_\nu(\nu)$ of the following optimization problem:
	\begin{equation}
		\maximize_{p_{\nu} (\nu)} \; \mathbb{E}_\nu\{\Phi(\nu)\}\; \quad \subjectto \; \mathbb{E}_\nu\{\nu\} \leq P_x,
		\label{Eqn:MimoPropositionProofProblem}
	\end{equation}
	and then, since $p^*_\nu(\nu)$ is a discrete pdf, evaluating the optimal pdf $p^*_{\boldsymbol{x}} (\hat{\boldsymbol{x}})$ as 
	\begin{equation}
		p^*_{\boldsymbol{x}}(\hat{\boldsymbol{x}}) = p^*_{\nu}(\nu), \quad \text{where} \; \hat{\boldsymbol{x}} = \argmax_{\boldsymbol{x}: \, \norm{\boldsymbol{x}}^2 = \nu} \Psi(\boldsymbol{x}).
	\end{equation}	
	Thus, since (\ref{Eqn:MimoPropositionProofProblem}) is in the form of (\ref{Eqn:GeneralOptimizationProblem}), applying Lemma~\ref{Theorem:Corollary2}, yields Proposition~\ref{Theorem:Proposition3}.
	This concludes the proof.
	
	\end{appendices}

	\bibliographystyle{IEEEtran}
	\bibliography{Final}

\begin{thebibliography}{10}
\providecommand{\url}[1]{#1}
\csname url@samestyle\endcsname
\providecommand{\newblock}{\relax}
\providecommand{\bibinfo}[2]{#2}
\providecommand{\BIBentrySTDinterwordspacing}{\spaceskip=0pt\relax}
\providecommand{\BIBentryALTinterwordstretchfactor}{4}
\providecommand{\BIBentryALTinterwordspacing}{\spaceskip=\fontdimen2\font plus
\BIBentryALTinterwordstretchfactor\fontdimen3\font minus
  \fontdimen4\font\relax}
\providecommand{\BIBforeignlanguage}[2]{{%
\expandafter\ifx\csname l@#1\endcsname\relax
\typeout{** WARNING: IEEEtran.bst: No hyphenation pattern has been}%
\typeout{** loaded for the language `#1'. Using the pattern for}%
\typeout{** the default language instead.}%
\else
\language=\csname l@#1\endcsname
\fi
#2}}
\providecommand{\BIBdecl}{\relax}
\BIBdecl

\bibitem{Grover2010}
P.~{Grover} and A.~{Sahai}, ``{Shannon} meets {Tesla}: {Wireless} information
  and power transfer,'' in \emph{Proc. IEEE Int. Symp. Information Theory},
  Jun. 2010, pp. 2363--2367.

\bibitem{Zhang2013}
R.~{Zhang} and C.~K. {Ho}, ``{MIMO} broadcasting for simultaneous wireless
  information and power transfer,'' \emph{IEEE Trans. Wirel. Commun.}, vol.~12,
  no.~5, pp. 1989--2001, May 2013.

\bibitem{Boshkovska2015}
E.~{Boshkovska}, D.~W.~K. {Ng}, N.~{Zlatanov}, and R.~{Schober}, ``Practical
  non-linear energy harvesting model and resource allocation for {SWIPT}
  systems,'' \emph{IEEE Commun. Lett.}, vol.~19, no.~12, pp. 2082--2085, Dec.
  2015.

\bibitem{Kim2020}
J.~Kim, B.~Clerckx, and P.~D. Mitcheson, ``Signal and system design for
  wireless power transfer: Prototype, experiment and validation,'' \emph{{IEEE}
  Trans. Wirel. Commun.}, vol.~19, no.~11, pp. 7453--7469, Nov. 2020.

\bibitem{Clerckx2016a}
B.~Clerckx and E.~Bayguzina, ``Waveform design for wireless power transfer,''
  \emph{{IEEE} Trans. on Signal Process.}, vol.~64, no.~23, pp. 6313--6328,
  Dec. 2016.

\bibitem{Clerckx2018}
B.~{Clerckx}, ``Wireless information and power transfer: Nonlinearity, waveform
  design, and rate-energy tradeoff,'' \emph{IEEE Trans. Signal Process.},
  vol.~66, no.~4, pp. 847--862, Feb. 2018.

\bibitem{Huang2017}
Y.~Huang and B.~Clerckx, ``Large-scale multiantenna multisine wireless power
  transfer,'' \emph{{IEEE} Trans. on Signal Process.}, vol.~65, no.~21, pp.
  5812--5827, Nov. 2017.

\bibitem{Shen2020}
S.~Shen and B.~Clerckx, ``Beamforming optimization for {MIMO} wireless power
  transfer with nonlinear energy harvesting: {RF} combining versus {DC}
  combining,'' \emph{IEEE Trans. Wirel. Commun.}, vol.~20, no.~1, pp. 199--213,
  Jan. 2021.

\bibitem{Morsi2019}
R.~{Morsi}, V.~{Jamali}, A.~{Hagelauer}, D.~W.~K. {Ng}, and R.~{Schober},
  ``Conditional capacity and transmit signal design for {SWIPT} systems with
  multiple nonlinear energy harvesting receivers,'' \emph{IEEE Trans. Commun.},
  vol.~68, no.~1, pp. 582--601, Jan. 2020.

\bibitem{Shanin2020}
N.~Shanin, L.~Cottatellucci, and R.~Schober, ``Markov decision process based
  design of {SWIPT} systems: Non-linear {EH} circuits, memory, and impedance
  mismatch,'' \emph{{IEEE} Trans. Commun.}, vol.~69, no.~2, pp. 1259 -- 1274,
  Feb. 2021.

\bibitem{Coope2001}
I.~D. Coope and C.~J. Price, ``On the convergence of grid-based methods for
  unconstrained optimization,'' \emph{{SIAM} J. on Optim.}, vol.~11, no.~4, pp.
  859--869, Jan. 2001.

\bibitem{Ghanem2020}
W.~R. Ghanem, V.~Jamali, and R.~Schober, ``Resource allocation for secure
  multi-user downlink {MISO}-{URLLC} systems,'' in \emph{{IEEE} Int. Conf. on
  Commun. ({ICC})}, Jun. 2020.

\bibitem{Sun2017}
Y.~Sun, P.~Babu, and D.~P. Palomar, ``Majorization-minimization algorithms in
  signal processing, communications, and machine learning,'' \emph{{IEEE}
  Trans. on Signal Process.}, vol.~65, no.~3, pp. 794--816, Feb. 2017.

\bibitem{Dokov2002}
S.~P. Dokov and D.~P. Morton, \emph{Higher-Order Upper Bounds on the
  Expectation of a Convex Function}.\hskip 1em plus 0.5em minus 0.4em\relax
  Humboldt-Universit{\"a}t zu Berlin, 2002.

\bibitem{Grant2015}
M.~Grant and S.~Boyd, ``{CVX}: Matlab software for disciplined convex
  programming, version 2.0 beta (2013),'' \emph{URL: http://cvxr. com/cvx},
  2015.

\bibitem{Xu2019}
D.~Xu, X.~Yu, Y.~Sun, D.~W.~K. Ng, and R.~Schober, ``Resource allocation for
  secure {IRS}-assisted multiuser {MISO} systems,'' in \emph{{IEEE} Globecom
  Workshop}, Dec. 2019.

\bibitem{Polik2010}
I.~P{\'o}lik and T.~Terlaky, \emph{Interior point methods for nonlinear
  optimization}.\hskip 1em plus 0.5em minus 0.4em\relax Springer, 2010.

\end{thebibliography}

\end{document}